\begin{document}

\title{I(FIB)F: Iterated Bloom Filters \\for Routing in Named Data Networks}


%
\author{\IEEEauthorblockN{Cristina Mu\~noz\IEEEauthorrefmark{1},
Liang Wang\IEEEauthorrefmark{1},
Eduardo Solana\IEEEauthorrefmark{4} and
Jon Crowcroft\IEEEauthorrefmark{1}}
\IEEEauthorblockA{\IEEEauthorrefmark{1}
University of Cambridge, United Kingdom \IEEEauthorrefmark{4}University of Geneva, Switzerland\\
\IEEEauthorrefmark{1}\{Cristina.Munoz, Liang.Wang, Jon.Crowcroft\}@cl.cam.ac.uk, \IEEEauthorrefmark{4}Eduardo.Solana@unige.ch
}
}


\maketitle

\begin{abstract}
Named Data Networks provide a clean-slate redesign of the Future Internet for efficient content distribution. Because Internet of Things are expected to compose a significant part of Future Internet, most content will be managed by constrained devices. Such devices are often equipped with limited CPU, memory, bandwidth, and energy supply. However, the current Named Data Networks design neglects the specific requirements of Internet of Things scenarios and many data structures need to be further optimised. The purpose of this research is to provide an efficient strategy to route in Named Data Networks by constructing a Forwarding Information Base using Iterated Bloom Filters defined as I(FIB)F. We propose the use of content names based on iterative hashes. This strategy leads to reduce the overhead of packets. Moreover, the memory and the complexity required in the forwarding strategy are lower than in current solutions. We compare our proposal with solutions based on hierarchical names and Standard Bloom Filters. We show how to further optimise I(FIB)F by exploiting the structure information contained in hierarchical content names. Finally, two strategies may be followed to reduce: (i) the overall memory for routing or (ii) the probability of false positives.
\end{abstract}

\begin{IEEEkeywords}
Iterated Bloom Filters, Named Data Networking, Information-Centric Networking.
\end{IEEEkeywords}

%
\IEEEpeerreviewmaketitle


\section{Introduction}
\label{sec:introduction}

Nowadays, Internet works mainly as a distribution network. The introduction in the market of new devices such as smartphones, tablets, wearables, sensor nodes, home appliances, etc. that compose the Internet of Things (IoT) leads to ever-growing global content. Moreover, the new ways of communication that include e-commerce, social and media networks also require data dissemination. Therefore, a future architecture to address the challenges confronting our current Internet has attracted a lot of research interest in the networking community. Information-Centric Networking (ICN) \cite{kutscher2016information} is proposed based on two fundamental concepts: i) accessing content by name; ii) universal caching. With both, the paradigm is shifted from point-to-point communication to information-centric dissemination.

Named Data Networking (NDN) \cite{zhang2014named} constitutes one of the most promising architectures for content management. The main advantage of having data as focal point is that no point-to-point connection is required. As a consequence, the flexibility and efficiency of the network is improved by the ``democratization of the Internet'' where connected entities may publish or request data based on content. Additionally, NDN can improve the latency by pushing content even closer to clients comparing to Content Delivery Networks (CDN). Moreover, NDN is more capable of capturing more network dynamics (temporal and spatial locality regarding users' request streams).

Finally, NDN provides inherent security because data is encrypted itself and digitally signed instead of partially protected only while in transit between specific end-points. Some of the benefits being that content intellectual property is preserved or that bandwidth is optimized through automatic caching.

Our research focuses on the key data structure Forwarding Information Base (FIB) and the forwarding strategy. We propose I(FIB)F, a novel FIB design based on Iterated Bloom Filters (IBFs). The efficiency of our solution in terms of complexity and memory requirements makes it very suitable for constrained devices. The results of our investigations confirm that IBFs may reduce the probability of false positives or the memory-positions required for routing. Furthermore, we show that content names may be estimated field by field to obtain more precise measurements required to design I(FIB)Fs.

Specifically, our contributions are as follows:

$\bullet$ We propose to use I(FIB)F to replace the forward Interest table in the current NDN design, along with detailed performance analysis.

$\bullet$  Comparing to the original FIB using hierarchical naming, our analysis shows that I(FIB)F can significantly reduce the traffic overhead, storage overhead, as well as computation overhead in forwarding.

$\bullet$  The comparison of I(FIB)F with Standard BFs concludes that the overall memory for routing and the probability of false positives may be reduced.

$\bullet$ We optimise I(FIB)F by exploiting the structure information contained in hierarchical content names.

$\bullet$  We present how to adopt the proposed solution in the existing NDN protocol stack and show the induced engineering overhead is minimal.

The rest of this paper is organized as follows: Section \ref{sec:iterated} details the background that motivates this work. Section \ref{sec:system} describes the design of the system. Section \ref{sec:analysis} analyses our design. Section \ref{sec:routing} presents how to adapt I(FIB)F to a current routing solution. Section \ref{sec:related} points out related work. Finally, Section \ref{sec:conclusion} summarizes our proposal.

\section{Iterated Bloom Filters}
\label{sec:iterated}

For the sake of clarity a notation table is given at Table \ref{table:notation}.

\begin{table}[!h]
\centering
\scriptsize
\caption{Notation table.}
\label{table:notation}
    \begin{tabular}{ll}\toprule  
    
    $b$ & Number of bits to define a memory-position of Standard BFs\\ 
    
    $b_c$ & Average number of bits per character of a hierarchical structure\\ 
	
	$b_{ind_x}$ & Number of bits to define a memory-position of an IBFs of level $x$\\ 
	
	$c$ & Average number of characters on a hierarchical structure\\ 
	
	$d$ & Number of levels of IBFs\\ 

	$f$ & Probability of false positives for Standard BFs\\
	
	$f_{i}$ & Overall probability of false positives of IBFs\\  
	
	$f_{ind_x}$ & Probability of false positives of an IBF of level $x$\\  
	
	$k$ & Overall number of hashes of Standard and IBFs\\ 
	
	$k_{i}$ & Number of hash functions of IBFs\\
	
	$k_{ind_x}$ & Number of hashes of an IBF of level $x$\\ 

	$m$ & Overall memory-positions of Standard BFs\\ 
	
	$m_{i}$ & Memory-positions of all IBFs\\ 
	
	$m_{ind_x}$ & Memory-positions of an IIBF of level $x$\\ 
	
	$n$ & Number of elements to insert to a Standard BF\\ 
	
	$n_{i}$ & Number of elements to insert to IBFs\\ 
	
	$n_{ind_x}$ & Number of elements to insert to an IBF of level $x$\\ 
	
	$p$ & Probability that a bit is still 0 after inserting all $n$ in a Standard BF\\ 
	
	$p_{i}$ & Probability that a bit is still 0 after inserting all $n_{i}$ in IBFs\\
	
	$p_{ind_x}$ & Prob. that a bit is still 0 after inserting all $n_{ind_x}$ in an IBF of level $x$\\

	$\mu$ & Average number of elements expected\\

	$\sigma$ & Standard deviation of the elements expected\\
	
	$\sigma^2$ & Variance of the elements expected\\
 
  \bottomrule
    \end{tabular}

\end{table}

Iterated or Merkle-Damg\r{a}rd hash functions  \cite{katz2014introduction} hash an input iteratively by feeding the output of each iteration into the input of the next. This design is very convenient when dealing with tree structures because the previous hashes contribute to obtain new branches. Then, a tree that saves the hashes of the following structure: $Field1/Field2$, has on top the hash of $Field 1$, $h_1$ which is computed as $h(Field1)$. The following hash $h_2$ may be computed as (i) $h(Field1 || Field2)$ or as (ii) $h(h_1 || Field2)$, where $||$ represents a concatenation. We note that by computing $h_2$ using the second strategy computing resources are saved because we profit from the previous computed hash.

Iterative Bloom Filters (IBFs) \cite{munoz2015fragmented} take advantage of the properties of iterative hash functions. The strategy followed by IBFs is to split the $m$ bit-positions of a Standard BF to save the same number of elements $n$. Then, a Standard BF \cite{Standard} may be split in $d$ IBFs of $m/d$ bit-positions and $n$ elements.

Table \ref{table:standard vs iterated fixing p} from \cite{ThesisCFIBFs} shows that when fixing the probability that a bit is still 0 after inserting all elements in Standard BFs and IBFs, the memory size, the number of elements and the probability of false positives is maintained. The advantage of using IBFs is that the computation is reduced because (i) they require less hashes and (ii) Individual BFs benefit from the properties of iterative trees.

\begin{table}[!ht]
\centering
\scriptsize
\caption{Standard BF vs IBFs: fixing the probability that a bit is 0 after inserting all elements according to the Standard BF.}
\label{table:standard vs iterated fixing p}
    \begin{tabular}{rcrcr}\toprule  

\textsc{Standard BF} & & \textsc{Individual IBF} & & \textsc{IBFs}\\
\midrule

    $m$ & & $m_{ind}=m/d$& & $m_i=m_{ind} \cdot d=m$ \\
    
    $n$ & & $n_{ind}=n$ & & $n_i=n$ \\ 
    
    $p=e^{-kn/m}$ & & $p_{ind}=p$& & $p_{i}=p$ \\ 
    
    $k=\frac{-m}{n}ln p  $ & & $k_{ind}= \frac{-m/d}{n}\ln p=\frac{k}{d}$ & & $\textcolor{blue}{k_i = k_{ind}=\frac{k}{d}}$ \\  
                
    $f=(1-p)^{k}$ & & $f_{ind}=(1-p)^{k/d}$&  & $f_{i}=f_{ind}^d=f$ \\ 

  \bottomrule
    \end{tabular}

\end{table}

\section{System Design}
\label{sec:system}


\subsection{Network model}
\label{sec:network}

Our network model is based on an IoT scenario wherein the connected devices such as wearables and wireless sensor nodes are constrained by their available CPU, memory, bandwidth, and battery resources.

In IoT context, the network is often formed in an ad-hoc way due to high mobility of devices and there is no fixed infrastructure, which, can further justify the need of low cost routing to deal with the induced churn. For this reason, we consider that the network deployment should be flexible enough and each node is in charge of measuring traffic to estimate the $Interests$ expected. This estimation is necessary to design the I(FIB)F according to the number of different $Names$ that may be routed through a node.


\subsection{Naming}
\label{sec:naming}

We consider a flat naming scheme to identify $Interests$. I(FIB)F enables the use a flat naming schemes without sacrificing the benefits of hierarchical structures in content names. Therefore, unlike other flat naming scheme, our proposed scheme can also achieve equivalent name aggregation as that in original NDN FIB.  

For the purpose of clarification we show how to hash a hierarchical name with an example. Let us consider that a new $Interest$ arrives with the $Name$: $Cambridge/ComputerLab/FW01/Windows$.

Figure \ref{fig:IBFs1} shows how to save the element using a Standard BF that requires four hashes per element and four Iterated BFs that require one hash each one. 

The upper part of Figure \ref{fig:IBFs1} shows how the $Name$ is embedded in a Standard BF (of 32 bit-positions) using four different hash functions $h1$, $h2$, $h3$, and $h4$. The four hash functions and their corresponding output positions are marked with different colors.

We show then how the same $Name$ is embedded in four IBFs (of 8 bit-positions each one). A single hash function is required. The iterated hash outputs for each field feed Individual IBFs. 

Then, in our solution we send the iterated hashes of each field using the hash function $h1$ so that we substitute the hierarchical name by: $h1a(Cambridge)$, $h1b( h1a || /ComputerLab)$, $h1c( h1b || /FW01)$ and $h1d( h1c || /Windows)$. We observe that with only one hash of the input and benefiting from the iterative outputs the element is saved whereas a Standard BF computes four separate hashes using the whole content name.

In case that the design of the IBFs requires two different hash functions we will combine also the outputs produced by the hash function $h2$: $h2a(Cambridge)$, $h2b( h2a || /ComputerLab)$, $h2c( h2b || /FW01)$ and $h2d( h2c || /Windows)$. We repeat this operation for the specified number of hash functions required by the system. Section \ref{sec:configuring} discusses how to adjust the total number of hash functions in a particular node.

\begin{figure}[tp!]
\centering{
\begin{tabular}{cc}
 	\includegraphics[width=1\linewidth]{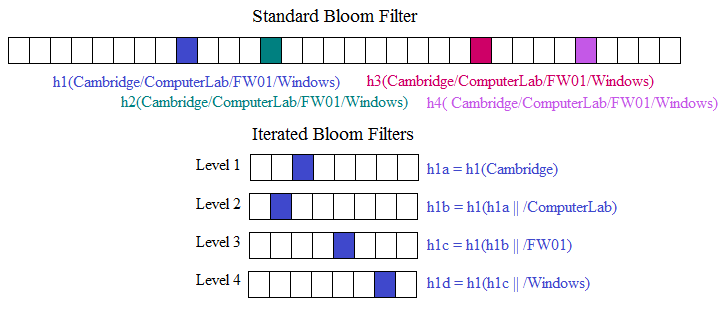}
\end{tabular}}
\caption{Hashing the element: \textit{Cambridge/ComputerLab/FW01/Windows}. The Standard BF requires four hash functions whereas IBFs require only one hash function. Colors represent the outputs of the same hash function.}
\label{fig:IBFs1}
\end{figure}

One of the properties of IBFs is that if top leaves coincide, the probability of false positives is reduced. This is due to the fact that some elements provide the same hash at some IBFs. Figure \ref{fig:IBFs2} follows the example provided in Figure \ref{fig:IBFs1} to show the final state of BFs when hashing four elements with similarities in the first fields of the structure. We observe that the IBFs that save the first fields in the structure have less bits set to 1 due to the fact that some elements coincide. This leads to the reduction in the probability of false positives at IBFs.

\begin{figure}[tp!]
\centering{
\begin{tabular}{cc}
 	\includegraphics[width=1\linewidth]{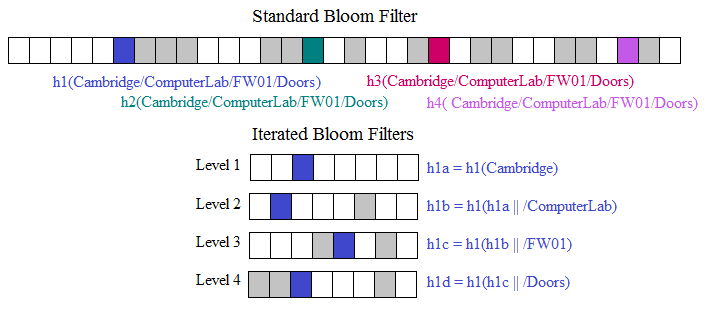}
\end{tabular}}
\caption{Hashing the elements: \textit{Cambridge/ComputerLab/FW01/Windows, Cambridge/ComputerLab/FW26/Windows, Cambridge/Physics/Mott/Motion, Cambridge/ComputerLab/FW01/Doors.} The grey positions represent the first three elements. IBFs finalise with more free positions, this leads (i) to use less memory at top levels or (ii) to decrease the probability of false positives.}
\label{fig:IBFs2}
\end{figure}

It is worth pointing out that the depth of hierarchy in content names (i.e., level $d$) is not limited by our scheme. In our example, when an element of more than four fields is received it is saved as shown from level one to three. Afterwards, the last fields are iteratively hashed and the last hash output is used to fill in the Individual IBF of level four.

\subsection{Forwarding Information Base}
\label{sec:forwarding}

\begin{figure}[!ht]
\centering{
\begin{tabular}{c}
 	\includegraphics[width=0.7\linewidth]{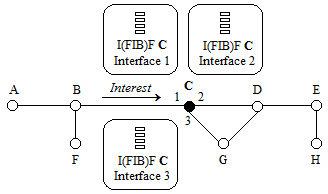}
\end{tabular}}
\caption{Node C receives an $Interest$ through interface 1 and checks their I(FIB)Fs at interfaces 2 and 3 for routing.}
\label{fig:Network}
\end{figure}


In NDN, there is only one original FIB per node. This FIB maps each $Name$ with the corresponding next hop(s), so that when an $Interest$ arrives the appropriate output interface is selected for forwarding the packet. In our system, we require to associate each interface with a separate I(FIB)F composed of IBFs. After, when an $Interest$ arrives, the iterated hashes contained in the $Name$ are directly checked against each possible output interface. Figure \ref{fig:Network} shows that node C has three I(FIB)Fs, one per interface.

The forwarding strategy requires to implement a membership test at IBFs of each FIB. Previously, a routing protocol needs to fill in IBFs according to its policy. In section \ref{sec:routing}, we propose to use the well-known Named-data Link State Routing Protocol (NLSR) \cite{hoque2013nlsr} to compute the shortest path to all nodes in the network.

When an $Interest$ arrives, the iterated hashes of the $Name$ are used to check the appropriated positions of IBFs associated with possible output interfaces. In Figure \ref{fig:Network} when an $Interest$ arrives through interface 1, the FIBs of Interfaces 2 and 3 are checked. Afterwards, the interface with the largest match is selected for forwarding which is similar to the largest prefix matching in the original FIB design but with much less computation overhead. Following with our example of Figure \ref{fig:IBFs2}, whether interface 2 provides a match for $h1a(Cambridge)$ and interface 3 provides a match for $h1a(Cambridge)$, $h1b( h1a || /ComputerLab)$ and $h1c( h1b || /NetOS)$. Then, interface 3 is selected for forwarding.


\subsection{Configuring I(FIB)F}
\label{sec:configuring}

Before designing each I(FIB)F, it is necessary to estimate the number of elements which may be received. We study two different methods to estimate elements in Section \ref{sec:fibanalysis}. We optimize our design by exploiting the structure contained in hierarchical names using a well-known estimation method as Bayesian statistics \cite{bolstad2013introduction}.

Once we have estimated the total number of levels and the elements at each level, we can design the I(FIB)F for a certain confidence interval. Typically, normal distributions are designed to offer a 95\% of coverage. This implies that the elements expected should be computed as $\mu + 1.96·\sigma$. Please, refer to Table \ref{table:notation} for notations. Other confidence intervals may be selected. For instance, if we choose to cover the 68\%, 90\% and 99\% of elements we may compute the elements expected as $\mu + \sigma$, $\mu + 1.65·\sigma$ or $\mu + 2.58·\sigma$. The estimation of a larger $\sigma$ leads to overestimate the memory requirement reducing the efficiency of our solution.

Given an estimated $n_{ind_x}$ for a coverage of 95\%, the I(FIB)F may be designed for $n_{ind_x}=\mu + 1.96·\sigma$ and a fixed $p_{ind_x}$.

Herein we consider two cases: i) non-repetition: there is no overlap in all levels between any two content names. Then,  I(FIB)F may be designed for a certain $f_{ind_x}$ that fixes $k_{ind_x}$ and $m_{ind_x}$; ii) repetitions: we allow different content names to share common entities in various levels as previously discussed in Section \ref{sec:iterated}. In this case, two strategies may be followed:

\textit{Strategy I:} Memory-positions may be saved by maintaining the overall probability of false positives. In this case, the relation $m_{ind_x}/n_{ind_x}$ must be maintained. As a result, $m_{ind_x}$ is reduced according to the new $n'_{ind_x}$ so that $n'_{ind_x} < n_{ind_x}$ and the final memory-positions may be computed as: 

\begin{equation}
m'_{ind_x}=\frac{m_{ind_x} \cdot n'_{ind_x}}{n_{ind_x}}
\end{equation}

\textit{Strategy II:} The probability of false positives may be reduced by maintaining the memory-positions. In this case, $m_{ind_x}$ is maintained and the new number of hashes $k'_{ind_x}$ may be computed as according to the new $n'_{ind_x}$:

\begin{equation}
k'_{ind_x}=-\frac{m_{ind_x}}{n'_{ind_x}}\ln p_{ind_x}
\end{equation}

Which leads to the following probability of false positives:

\begin{equation}
f'_{ind_x}\approx(1-p_{ind_x})^{k'_{ind_x}}
\end{equation}

As detailed in Section \ref{sec:iterated}, if $f_{ind_x}$ is reduced then the overall probability of false positives for IBFs $f_{i}$ is also reduced.

It is remarkable to mention that if the strategy I is followed, we may have IBFs of different memory-positions while if we consider strategy II all IBFs will have the same number of memory-positions.

Finally, in the event that we receive more elements than expected, we may use Dynamic Bloom Filters (DBF) \cite{guo2010dynamic} to prevent information loss. A DBF is a new BF that starts saving new elements when the old BF has reached the limit of elements to accept. In our case, a new IBF must be added to the existing one for a specific level. Moreover, elements may be removed by adding a counter to each position \cite{bonomi2006improved}.


\section{Analysis}
\label{sec:analysis}


\subsection{Naming}
\label{sec:namingeval}
In this section, we evaluate the performance of three different naming strategies. For the sake of simplicity we use four-level names in the following analysis, but note that the analysis and our conclusions can be equally generalized to names of arbitrary depth/level.

\textbf{Hierarchical structure: } When using a hierarchical structure, the whole construction is required: $Field1/Field2/Field3/Field4$. In the worst case, when identifying each field with a real name, one word is necessary at each field to define the structure. The average word length in English is 4.5 \cite{shannon1951prediction}. We assume that 8 bits are required per character. Following our example, we require 144 bits.

\textbf{Standard BFs: }Standard BFs require $k$ different hashes. Then, when using this strategy $k$ hashes $h(Field1/Field2/Field3/Field4)$ are necessary.

\textbf{Iterated BFs: }IBFs require to individually save each field. Then, $k/d$, in our example $k/4$, different hashes of each field $h(Field1)$, $h(h(Field1)||Field2)$, $h(h(h(Field1)||Field2)||/Field3)$ and $h(h(h(h(Field1)||Field2)||/Field3)||/Field4)$ are necessary. So the total number of hashes required to define IBFs is the same than for Standard BFs. 


When designing an I(FIB)F, one of the parameters to specify is $k_{i}$. This parameter affects the number of bits required for the naming. Then, as the number of needed hashes decreases, the less naming bits are used, so that the overhead in the $Name$ is reduced. Figure \ref{fig:namingbitslevels} shows the total naming bits required depending on the number of hashes for three different memory sizes. Figure \ref{fig:namingbitslevels}.a shows that for a memory of 16.38 kB less than 9, 5 and 3 hashes are required in a Standard BF (SBF), IBFs of 2 levels (2IBFs) and IBFs of 4 levels (4IBFs) to transmit less bits than a hierarchical structure.

The boundary on the number of hashes at Standard BFs is computed as:

\begin{equation}
k \leq \frac{c \cdot b_c}{b}
\end{equation}


Assuming that IBFs have similar memory-positions, so that all $b_{ind_x}$ are equal, then the boundary on the number of hash functions at IBFs is computed as:

\begin{equation}
k_{i} \leq \frac{c \cdot b_c}{d \cdot b_{ind_x} }
\end{equation}


The comparison of Figures \ref{fig:k}.a, \ref{fig:k}.b and \ref{fig:k}.c shows that bigger memories admit a larger number of elements for the same number of hash functions. Moreover, $f$ is the same in all cases (see Figure \ref{fig:f}.a). We note that larger number of hash functions lead to admit less elements whereas $f$ is decreased. Then, we conclude that for a fixed memory the transmission of less naming bits, which implies the use of less hash functions, increases $f$. 

Figure \ref{fig:namingbitslevels}.a shows that SBF, 2IBFs and 4IBFs of four, two and one hash functions require 68, 64 and 60 naming bits for a $f$ of 0.0625 and 22710 elements. The improvement with respect to the hierarchical structure is notable because less than half of the bits are transmitted. IBFs composed of more levels transmit even less bits because its $b_{ind_x}$ is lower.

If we decide to reduce $f$ we increase the number of hash functions. In the case of 4IBFs and one hash function we decrease to the half the number of elements for a probability of 0.0039, which is 16 times better than for one hash function. This change affects also to the naming bits that are doubled.

\begin{figure*}[!ht]
\centering{
\begin{tabular}{ccc}
 	 \includegraphics[width=0.31\linewidth]{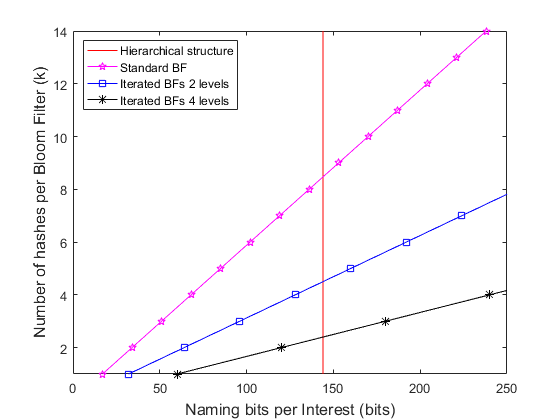}
		\label{fig:namingbitslevelsm1} &
 	\includegraphics[width=0.31\linewidth]{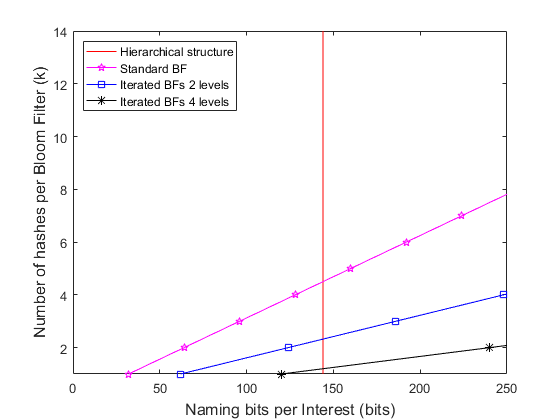}
		\label{fig:namingbitslevelsm2} &
 	\includegraphics[width=0.31\linewidth]{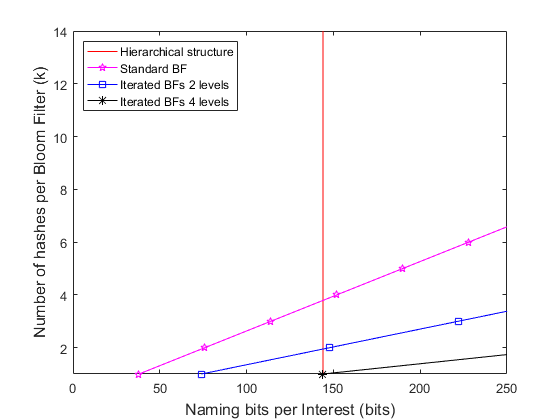}
		\label{fig:namingbitslevelsm3}\\
  	 \scriptsize a) Case I: $m$=16.38 kB, $b$=17, $p$=0.5 & \scriptsize b) Case II: $m$=536.87 MB, $b$=32, $p$=0.5 & \scriptsize c) Case III: $m$=34.36 GB, $b$=38, $p$=0.5\\
\end{tabular}}
\caption{Total naming bits transmitted for different memory sizes. The number of hash functions required is shown for each case.}
\label{fig:namingbitslevels}
\end{figure*}

\begin{figure*}[!ht]
\centering{
\begin{tabular}{ccc}
 	 \includegraphics[width=0.31\linewidth]{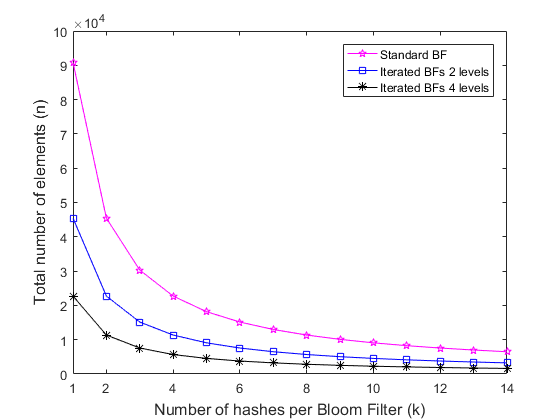}
		\label{fig:km1} &
 	\includegraphics[width=0.31\linewidth]{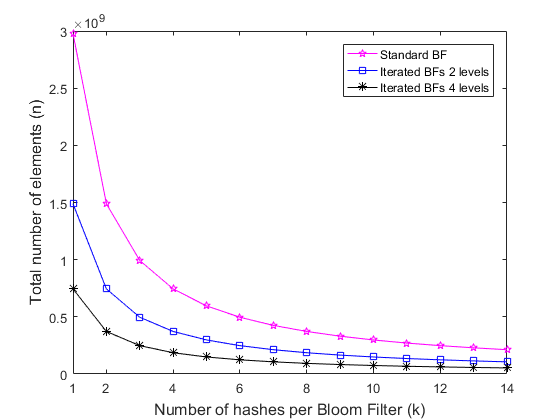}
		\label{fig:km2} &
 	\includegraphics[width=0.31\linewidth]{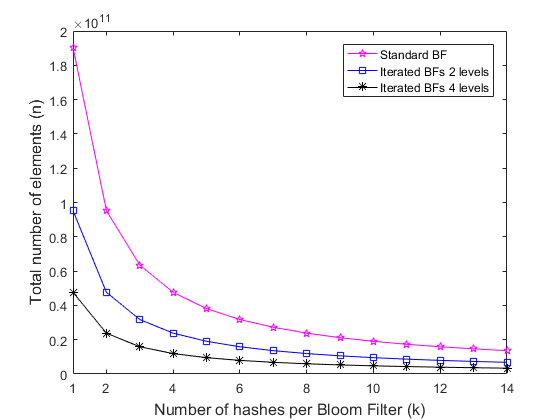}
		\label{fig:km3}\\
  	 \scriptsize a) Case I: $m$=16.38 kB, $b$=17, $p$=0.5 & \scriptsize b) Case II: $m$=536.87 MB, $b$=32, $p$=0.5 & \scriptsize c) Case III: $m$=34.36 GB, $b$=38, $p$=0.5\\
\end{tabular}}
\caption{Number of elements $n$ admitted depending on the number of hash functions for different memory sizes.}
\label{fig:k}
\end{figure*}

As previously discussed, $f$ decreases if repetitions occur (see Figure \ref{fig:f}.b). Table \ref{table:repetitions} shows this effect for SBFs, 2IBFs and 4IBFs of 16.38 kB designed for four, two, and one hash functions. 

In case of no-repetitions $f$ is 0.0625. We assume that the same name structures are received. In the first example, SBF has 5\% of repeated elements. To keep consistency, the last level of the IBFs also admits 5\% of repetitions. 2IBFs admit 20\% of repetitions at the first level. 4IBFs admit 50\%, 20\% and 10\% of repetitions from the first to the third level. Under these conditions, 4IBFs reduce $f$ by one half with respect to SBFs. We conclude that IBFs of larger levels benefit from repetitions.

In the second example, all percentages are increased by 10\%. We observe that all $f$s are reduced and that the difference between SBFs and 4IBFs has increased when compared with the previous example. 

Finally, we consider repetitions of 50\% at all levels. As a result, all $f$s are equal. We conclude that IBFs benefit from repetitions if different percentage of repetitions are expected at each level.

\begin{figure*}[!ht]
\centering{
\begin{tabular}{ccc}
 	 \includegraphics[width=0.31\linewidth]{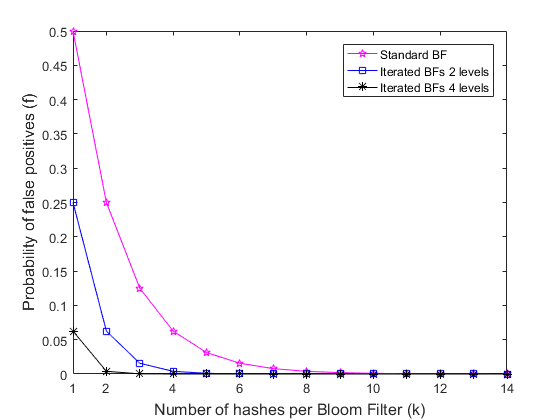}
		\label{fig:f_b17} &
 	\includegraphics[width=0.31\linewidth]{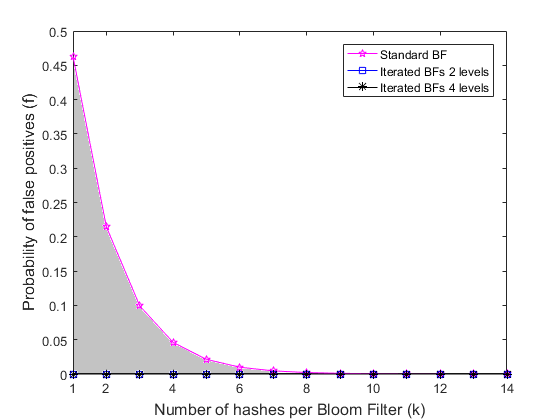}
		\label{fig:repe5025105m1} &
	\includegraphics[width=0.31\linewidth]{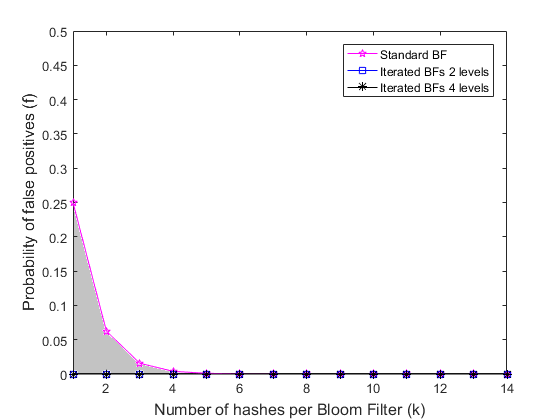}
		\label{fig:repe5025105m1} 	
		\\
  	 \scriptsize a) No repetitions & \scriptsize b) Last BF fixed to 10\% of repetitions & \scriptsize c) Last BF fixed to 50\% of repetitions\\
\end{tabular}}
\caption{Probability of false positives when (a) no-repetitions occur. Range for the probability of false positives in case of 2IBFs and 4IBFs when: (b) the last level is fixed for repetitions of 10\% and (c) the last level is fixed for repetitions of 50\%.}
\label{fig:f}
\end{figure*}

\begin{table}[!h]
\centering
\scriptsize
\caption{Overall and Individual probabilities of false positives for different percentage of element repetitions.}
\label{table:repetitions}
    \begin{tabular}{rcrrcrrcrr}\toprule  

& \multicolumn{2}{r}{\textsc{Example I}} && \multicolumn{2}{r}{\textsc{Example II}} && \multicolumn{2}{r}{\textsc{Example III}} \\

& \textsc{$f$} & \textsc{$n_{rep}$} & & \textsc{$f$} & \textsc{$n_{rep}$} & & \textsc{$f$} & \textsc{$n_{rep}$}\\
\midrule

	
	\textsc{Standard BF} & 0.0541 & 5\% & & 0.0393 & 15\% & & 0.0074 & 50\% \\
\midrule
	\textsc{2 level IBFs} & 0.0422 &  & & 0.0261 &  & & 0.0074 & \\ 
\addlinespace
	\textsc{Level 1} & 0.1812 & 20\%  && 0.1316 & 35\%  & & 0.0858 & 50\% \\
	\textsc{Level 2} & 0.2327 & 5\%   && 0.1982 & 15\%  & & 0.0858 & 50\% \\
\midrule	
	\textsc{4 level IBFs} & 0.0266 &  & & 0.0166 &  &  & 0.0074 \\ 
\addlinespace
	\textsc{Level 1} & 0.2929 &  50\%  & & 0.2421 & 60\% & & 0.2929 & 50\% \\
	\textsc{Level 2} & 0.4054 &  25\%  & & 0.3627 & 35\%  & & 0.2929 & 50\% \\
	\textsc{Level 3} & 0.4641 &  10\% & & 0.4257 & 20\%& & 0.2929 & 50\% \\
	\textsc{Level 4} & 0.4824 & 5\% & & 0.4452 & 15\% & & 0.2929 & 50\% \\
   
  \bottomrule
    \end{tabular}

\end{table}

Furthermore, we evaluate $n$ by assuming a fixed memory-size and different FIBs. Figure \ref{fig:fa}.a compares the different strategies for different $f$ and one I(FIB)F. We observe, that all strategies that use BFs require the same $n$ for a fixed $m$. We remark that hierarchical structures do not depend on $f$. As discussed before, if $n$ increases then $f$ increases as well. If decreasing $f$ is a design
requirement, we must also decrease $n$.

We would like to notice that Case III has been designed to obtain the same number of naming bits for hierarchical structures and 4IBFs. In this case, IBFs have an $f$ of 0.0625 while hierarchical structures have no $f$. However, memories of 34.36 GB admit $4763\cdot10^{10}$ elements without repetition in case of 4IBFs and a single I(FIB)F whereas hierarchical structures may hold $1909\cdot10^6$ elements. The benefit of using IBFs is obvious by noticing there are several orders of magnitude difference in memory efficiency.


Finally, Figure \ref{fig:fa}.b shows how $n$ decreases when admitting more I(FIB)Fs in a node, so that the number of interfaces is increased. We observe that even for ten I(FIB)Fs, SBFs, 2IBFs and 4IBFs admit more elements for the same memory-size. In the Future Internet it is likely that limited devices, in terms of CPU, memory, bandwidth and battery, will not accept a large number of interfaces. As a result, we state that our proposal is very convenient and superior for these scenarios.


\begin{figure*}[!ht]
\centering{
\begin{tabular}{cc}
 	 \includegraphics[width=0.31\linewidth]{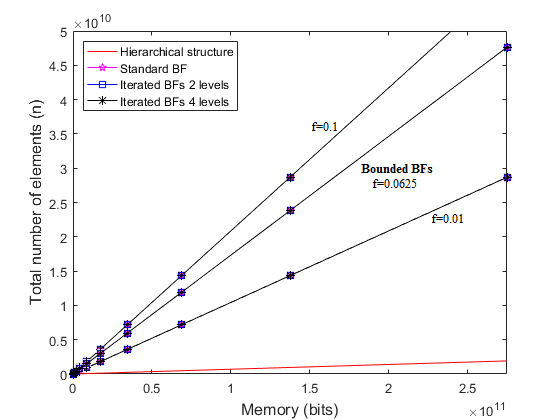}
		\label{fig:adjusted} &
 	\includegraphics[width=0.31\linewidth]{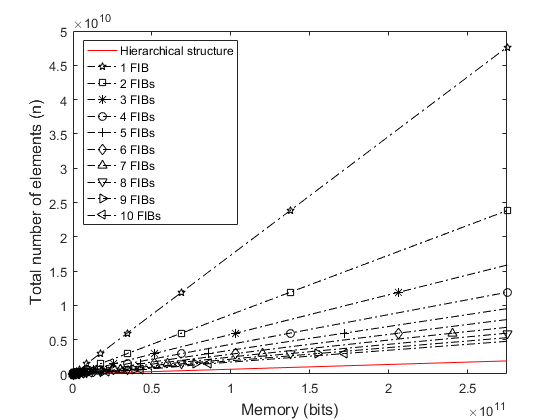}
		\label{fig:f0.01} \\
	\scriptsize a) One FIB for Standard and Iterated BFs of $f=0.0625$ (bounded), $f=0.1$  and $f=0.01$ & \scriptsize b) From one to ten FIBs using bounded Standard and Iterated BFs \\
\end{tabular}}
\caption{Memory required in cases I, II and II for: a) bounded Standard and Iterated BFs of $f=0.0625$ and for Standard and Iterated BFs of $f=0.1$  and $f=0.01$ when having a single FIB; b) bounded Standard and Iterated BFs when having different FIBs.}
\label{fig:fa}
\end{figure*}

\subsection{Estimation of Names}
\label{sec:fibanalysis}


As detailed in Section \ref{sec:forwarding} we design I(FIB)F for a certain confidence interval. Then, a precise estimation of the standard deviation is fundamental to reduce the uncertainty on the estimation of the number of elements per field.

Let us consider a well-known estimation method based on Bayesian statistics \cite{murphy2007conjugate}. To simplify our analysis, we study a conjugate Bayesian of the Gaussian distribution assuming a certain variance. Let $X=(x_1, x_2, ..., x_n)$ be the observed content names in a request stream. It is known that $n$ observations with variance $\sigma^2$ and mean $\bar{x}$ which are normally distributed are equivalent to a likelihood of $\mathcal{N}=(\bar{x}, \frac{\sigma^2}{n})$. Furthermore, we need to define the distribution of data as the prior $\mathcal{N}=(\mu_0, \sigma^2_0)$. Typically, a prior is defined in the absence of data so that $\mu_0=0$. Additionally, we must take into account that if the distribution of data is not well-defined larger $\sigma^2_0$  must be preferred to assume a non informative prior. Under these assumptions, the posterior is defined as:
\begin{equation}
\mathcal{N}=(\mu_n, \sigma^2_n)
\end{equation}
where the mean  and the variance are:
\begin{equation}
\mu_n=\sigma^2_n(\frac{\mu_0}{\sigma^2_0}+\frac{n\bar{x}}{\sigma^2})^{-1}
\end{equation}
\begin{equation}
\sigma^2_n=(\frac{n}{\sigma^2}+\frac{1}{\sigma^2_0})^{-1}
\end{equation}

This solution is suitable for Standard BFs because whole structures are hashed, so that in case of 95\% of coverage we estimate to receive $\mu_n + 1.96·\sigma_n$ elements.

In the case of I(FIB)F, it is necessary that a method estimates the frequency of a variable at a certain field of the structure. We know that if top structure fields coincide, they are hashed to the same position and, as a consequence, the IBFs may be designed for receiving less number of elements. For the sake of simplicity, let us define hierarchical structures of up to two fields that may contain the variables ${a, b}$. Therefore, we need to estimate $a$, $b$, $a/b$ and $b/a$ during a certain period of time from $0$ to $t$. At this point, two different methods may be used:

\textit{Method I:} When a hierarchical structure is received the whole name is estimated. The obtained posteriors are the following ones:

$\mathcal{N}_{a'_1}(\mu_{a'_1}, \sigma^2_{a'_1})$, $\mathcal{N}_{b'_1}(\mu_{b'_1}, \sigma^2_{b'_1})$, $\mathcal{N}_{(a/b)'_1}(\mu_{(a/b)'_1}, \sigma^2_{(a/b)'_1})$ and $\mathcal{N}_{(b/a)'_1}(\mu_{(b/a)'_1}, \sigma^2_{(b/a)'_1})$.
 
Afterwards, when all observations have been taken, we need to add the distributions that contain the same variable in a certain field. In our example, the final estimations are: 

$\mathcal{N}_{a_1}(\mu_{a'_1}+\mu_{(a/b)'_1}, \sigma^2_{a'_1}+\sigma^2_{(a/b)'_1})$, $\mathcal{N}_{b_1}(\mu_{b'_1}+\mu_{(b/a)'_1}, \sigma^2_{b'_1}+\sigma^2_{(b/a)'_1})$, $\mathcal{N}_{(a/b)_1}=\mathcal{N}_{(a/b)'_1}$ and $\mathcal{N}_{(b/a)_1}=\mathcal{N}_{(b/a)'_1}$.



\textit{Method II:} When a hierarchical structure is received, each variable of a field is independently estimated. Following our example, the final estimations are: 

$\mathcal{N}_{a_2}(\mu_{a'_2}, \sigma^2_{a'_2})$, $\mathcal{N}_{b_2}(\mu_{b'_2}, \sigma^2_{b'_2})$, $\mathcal{N}_{(a/b)_2}(\mu_{(a/b)'_2}, \sigma^2_{(a/b)'_2})$ and $\mathcal{N}_{(b/a)_2}(\mu_{(b/a)'_2}, \sigma^2_{(b/a)'_2})$.


\newtheorem{theorem}{Theorem}
\begin{theorem}
\label{variance}
We state that the design of I(FIB)F is more accurate if an estimation method is required field by field of the hierarchical names.
\end{theorem}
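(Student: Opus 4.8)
The plan is to recast ``more accurate'' as ``smaller posterior variance,'' since the design budget per Individual IBF is $\mu + 1.96\sigma$ and a smaller $\sigma$ yields a tighter confidence interval and hence a less wasteful choice of $n_{ind_x}$. I would fix one field value, say $a$ in the first field of the two-field structure over $\{a,b\}$, and note that both methods estimate the \emph{same} underlying quantity: the number of observed names carrying $a$ in the first field, which arrives both in the standalone name $a$ and in the composite name $a/b$. Let $n_a$ and $n_{a/b}$ be the counts of these two names over $[0,t]$. The whole argument then reduces to comparing the variance each method assigns to this single quantity.

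First I would instantiate the posterior-variance formula $\sigma^2_n=(n/\sigma^2+1/\sigma^2_0)^{-1}$ for each method. Method~I builds its first-field estimate for $a$ by summing the two whole-name posteriors $\mathcal{N}_{a'_1}$ and $\mathcal{N}_{(a/b)'_1}$, so by the independence used in its aggregation step its variance is $\sigma^2_{a'_1}+\sigma^2_{(a/b)'_1}=(n_a/\sigma^2+1/\sigma^2_0)^{-1}+(n_{a/b}/\sigma^2+1/\sigma^2_0)^{-1}$. Method~II instead routes every arrival bearing $a$ in the first field into one posterior, accumulating $n_a+n_{a/b}$ observations against a single prior, so $\sigma^2_{a_2}=((n_a+n_{a/b})/\sigma^2+1/\sigma^2_0)^{-1}$.

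The core of the proof is the strict inequality $\sigma^2_{a_2}<\sigma^2_{a'_1}+\sigma^2_{(a/b)'_1}$. I would introduce $g(n)=(n/\sigma^2+1/\sigma^2_0)^{-1}$, observe that it is positive and strictly decreasing in $n$, and conclude $g(n_a+n_{a/b})<g(n_a)<g(n_a)+g(n_{a/b})$. Equivalently, Method~II adds \emph{precisions} (the $n/\sigma^2$ terms accumulate under a single prior) whereas Method~I adds \emph{variances} and therefore pays the prior precision $1/\sigma^2_0$ twice; both readings expose the gap, and both make the inequality strict whenever $\sigma^2_0<\infty$. The identical argument applies to $b$ at the first field, and for deeper structures to every field value, since Method~II's per-field variance is $g$ evaluated at the total relevant count, which lies strictly below the sum of $g$'s of the partial counts that Method~I is forced to combine.

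The step I expect to carry the real content is not the inequality, which is elementary, but the bookkeeping of observations that justifies the two variance expressions. One must verify that Method~II genuinely splits each $a/b$ arrival into a first-field tally for $a$ and a second-field tally for $b$, so that the pooled first-field count for $a$ is exactly $n_a+n_{a/b}$, and that Method~I's post-hoc aggregation can only recombine the two whole-name posteriors as independent summands, inflating the variance. Once this accounting is pinned down, the strictly smaller posterior variance translates directly into a smaller $\mu+1.96\sigma$ budget per Individual IBF, i.e.\ a more accurate and more memory-efficient I(FIB)F design, which establishes the claim.
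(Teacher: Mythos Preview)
Your proposal is correct and follows essentially the same line as the paper's own proof: both identify $\sigma^2_{a_1}=\sigma^2_{a'_1}+\sigma^2_{(a/b)'_1}$ for Method~I versus the single-posterior variance $\sigma^2_{a_2}$ for Method~II, and argue the latter is smaller because pooling observations into one Bayesian update shrinks the variance while summing independent Gaussians adds variances. Your version is more explicit---you instantiate the posterior-variance formula and verify the inequality via the monotonicity of $g(n)=(n/\sigma^2+1/\sigma_0^2)^{-1}$---whereas the paper simply invokes these two facts as well known, but the underlying argument is the same.
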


\begin{proof}
It is well-known \cite{murphy2007conjugate} that Bayesian methods reduce the variance of estimations when increasing the number of observations. We also know that the addition of Gaussian distributions increase the uncertainty by adding their variances. As a consequence, in our example $\sigma^2_{a_1}=\sigma^2_{a'_1}+\sigma^2_{(a/b)'_1}$, $\sigma^2_{a_2}=\sigma^2_{a'_2}$ ,$\sigma^2_{b_1}=\sigma^2_{b'_1}+\sigma^2_{(b/a)'_1}$ and $\sigma^2_{b_2}=\sigma^2_{b'_2}$; so that $\sigma^2_{a_1}>\sigma^2_{a_2}$ and $\sigma^2_{b_1}>\sigma^2_{b_2}$. We conclude that Method II is more precise than Method I because its uncertainty is smaller.
\end{proof}
Theorem \ref{variance} is important because it provides a more realistic view of the number of elements that should be inserted at each Individual IBF. The disadvantage when comparing our solution with Standard BFs is that we need to check field by field all the structures received, although this is only necessary once. Moreover, traffic measurement \cite{adamic2002zipf} can be calculated off-line. Typically, the statistics remain stable, they will not change in a short time, so the overall overhead is small. In any case, the benefits of IBFs overcome this inconvenience specially when having large number of coincidences at top fields of $Names$.

\subsection{Configuring I(FIB)F}
\label{sec:fibconfiguring}





As previously discussed, we may design an I(FIB)F following two strategies: (i) we may save memory-positions by maintaining the overall probability of false positives (see Theorem \ref{reduce_m_maintain_f}) or (ii) we may reduce the overall probability of false positives by maintaining the overall memory-positions (see Theorem \ref{reduce_f_maintain_m}).

Let us consider a Standard BF of $m$ memory-positions, $n$ elements, $f$ probability of false positives, $k$ hash functions and $p$ probability that a bit is still 0 after inserting all $n$. Let us consider also $d$ Individual IBFs of $m_{ind_x}$ memory-positions, $n_{ind_x}$ elements, $f_{ind_x}$ probability of false positives, $k_{ind_x}$ hash functions and $p_{ind_x}$ probability that a bit is still 0 after inserting all $n_{ind_x}$ where $x$ indicates the level number of an Individual IBF.

\begin{theorem}
\label{reduce_m_maintain_f}
We state that an I(FIB)F may save memory-positions if the overall probability of false positives remains the same.
\end{theorem}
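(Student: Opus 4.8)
The plan is to reduce the statement to the per-level algebraic relations established in Section~\ref{sec:iterated} and then exhibit an explicit reallocation that shrinks memory without touching the overall false-positive probability. First I would recall that the overall probability of false positives factorises across levels as $f_i = \prod_{x=1}^{d} f_{ind_x}$, with each individual term governed by $f_{ind_x} = (1-p_{ind_x})^{k_{ind_x}}$ and the standard fill relation $p_{ind_x} = e^{-k_{ind_x} n_{ind_x}/m_{ind_x}}$. The key observation is that, with the number of hash functions $k_{ind_x}$ held fixed, $f_{ind_x}$ depends on $m_{ind_x}$ and $n_{ind_x}$ only through the ratio $n_{ind_x}/m_{ind_x}$ (via $p_{ind_x}$). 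Hence holding every $p_{ind_x}$ constant keeps every $f_{ind_x}$ constant, and therefore keeps $f_i$ constant.

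Next I would bring in the effect of repetitions. As discussed in Section~\ref{sec:naming}, coincidences in the top fields of hierarchical names cause several elements to hash to the same positions, so the number of \emph{distinct} elements actually stored at level $x$ drops to some $n'_{ind_x} < n_{ind_x}$. The step is then to ask which memory $m'_{ind_x}$ preserves $p_{ind_x}$ under this reduced load. Keeping both $k_{ind_x}$ and $p_{ind_x}$ fixed forces $n'_{ind_x}/m'_{ind_x} = n_{ind_x}/m_{ind_x}$, i.e.\ the ratio $m_{ind_x}/n_{ind_x}$ must be maintained, which yields $m'_{ind_x} = m_{ind_x} n'_{ind_x}/n_{ind_x}$. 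Since $n'_{ind_x} < n_{ind_x}$ this gives $m'_{ind_x} < m_{ind_x}$ at every level where a repetition occurs, so summing over $x$ produces a strictly smaller total memory while, by construction, each $f_{ind_x}$ and therefore $f_i$ is unchanged.

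There is no serious analytic obstacle here; the content is essentially the derivation of Strategy~I of Section~\ref{sec:configuring}, and the remaining work is bookkeeping. The one point I would be careful about is the logical direction: the claim is an existence statement (``may save''), so it suffices to exhibit the single reallocation above rather than to characterise every memory assignment yielding a given $f_i$. I would also state explicitly that the saving is genuine precisely when repetitions occur, that is when $n'_{ind_x} < n_{ind_x}$ for at least one level; in the non-repetition case $n'_{ind_x} = n_{ind_x}$ and the construction degenerates to no saving, consistent with the equalities of Table~\ref{table:standard vs iterated fixing p}.
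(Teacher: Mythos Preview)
Your proposal is correct and follows essentially the same route as the paper: fix $k_{ind_x}$ and $p_{ind_x}$ so that each $f_{ind_x}$ (and hence $f_i$) is unchanged, observe that repetitions in the hierarchical fields drive the effective element count at a level below the baseline, and then shrink $m_{ind_x}$ proportionally via the preserved $m/n$ ratio to conclude $\sum_x m_{ind_x} < m$. Your write-up is slightly more explicit (stating the product decomposition $f_i=\prod_x f_{ind_x}$, the Strategy~I formula, and the caveat that the saving is vacuous without repetitions), but the underlying argument is the same as the paper's.
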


\begin{proof}
When hashing a name field to an Individual IBF more than once, $f_{ind_x}$ may be maintained: $f_{ind_x}\approx(1-p_{ind_x})^{k_{ind_x}}$, if $k_{ind_x}$ and $p_{ind_x}$ remain unchanged. As a consequence, the relation $\frac{m}{n}$ is also maintained because $k_{ind_x}=-\frac{m_{ind_x}}{n_{ind_x}}\ln p_{ind_x}$. Table \ref{table:standard vs iterated fixing p} shows that $n=n_{ind_x}$, however if the elements follow a hierarchical structure and the IBFs are hashed field by field then $n_{ind_x} < n$. As a result, $m_{ind_x}$ must be reduced so that IBFs save memory-positions when compared to Standard BFs: $\sum_{i=1}^{x}m_{ind_x} < m$. Therefore, we state that memory-positions in an I(FIB)F may be saved for maintaining $f_{ind_x}$ so that $f_i$ is also maintained.
\end{proof}

\begin{theorem}
\label{reduce_f_maintain_m}
We state that an I(FIB)F may reduce the overall probability of false positives if the overall memory-positions remain the same.
\end{theorem}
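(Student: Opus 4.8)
The plan is to mirror the argument of Theorem~\ref{reduce_m_maintain_f}, but now exploiting Strategy~II instead of Strategy~I: rather than shrinking the memory while holding the hash count fixed, I would hold each $m_{ind_x}$ fixed and spend the slack created by the hierarchical coincidences on additional hash functions. First I would freeze the memory budget of every level, so that $m_{ind_x}$ is unchanged from the non-repetition design and hence $\sum_{x=1}^{d} m_{ind_x}=m$ exactly as in a Standard BF. Then I would invoke the same observation used in Theorem~\ref{reduce_m_maintain_f}: when names follow a hierarchical structure and the Individual IBFs are filled field by field, coinciding top fields collapse to the same positions, so the effective element count obeys $n'_{ind_x}<n_{ind_x}$ at the levels that enjoy repetitions.

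Next I would keep $p_{ind_x}$ fixed and feed the reduced $n'_{ind_x}$ into the Strategy~II relation $k'_{ind_x}=-\frac{m_{ind_x}}{n'_{ind_x}}\ln p_{ind_x}$. Because $0<p_{ind_x}<1$ gives $-\ln p_{ind_x}>0$, and because $m_{ind_x}$ is held fixed while $n'_{ind_x}$ decreases, this immediately yields $k'_{ind_x}>k_{ind_x}$; intuitively, the same number of slots now stores fewer elements, so more independent hashes can be afforded at the same occupancy. I would then substitute into $f'_{ind_x}\approx(1-p_{ind_x})^{k'_{ind_x}}$ and use the fact that $1-p_{ind_x}\in(0,1)$, so raising it to a strictly larger exponent strictly decreases it, giving $f'_{ind_x}<f_{ind_x}$ at every level that benefits.

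To finish, I would pass from the per-level false positives to the overall figure. A false positive for the full I(FIB)F requires a simultaneous match at all $d$ levels, so $f_{i}=\prod_{x=1}^{d} f_{ind_x}$ (consistent with $f_{i}=f_{ind}^{d}$ in Table~\ref{table:standard vs iterated fixing p} when the levels are identical). Since every factor decreases, and at least one strictly, while the memory $\sum_{x=1}^{d} m_{ind_x}=m$ is untouched, the product strictly decreases, giving the claimed reduction of $f_{i}$ at constant overall memory.

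I expect the main obstacle to be the bookkeeping around what is held fixed. Holding $p_{ind_x}$ constant is the design choice that defines Strategy~II (just as Table~\ref{table:standard vs iterated fixing p} fixes $p$), and it forces the hash count to adjust through $p_{ind_x}=e^{-k'_{ind_x} n'_{ind_x}/m_{ind_x}}$ once the reduced load $n'_{ind_x}$ is imposed at fixed $m_{ind_x}$; this is precisely the relation that yields $k'_{ind_x}>k_{ind_x}$. The only genuine analytic content is the monotonicity of $(1-p_{ind_x})^{k}$ in $k$, which relies on $1-p_{ind_x}\in(0,1)$. I would also note that the $\approx$ in the formulas treats $k'_{ind_x}$ as a real-valued design quantity; the strict inequality $f_{i}<f$ survives integer rounding of the hash counts as long as at least one level genuinely benefits from repetitions, i.e. some $n'_{ind_x}<n_{ind_x}$ holds strictly.
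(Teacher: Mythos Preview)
Your proposal is correct and follows essentially the same approach as the paper's own proof: fix $m_{ind_x}$ and $p_{ind_x}$, use the hierarchical coincidences to get a strictly smaller effective element count, infer from $k_{ind_x}=-\frac{m_{ind_x}}{n_{ind_x}}\ln p_{ind_x}$ that the per-level hash count increases, and conclude that $f_{ind_x}\approx(1-p_{ind_x})^{k_{ind_x}}$ and hence $f_i$ decrease while $m_i=m$ is unchanged. Your write-up is in fact more careful than the paper's (you make explicit the monotonicity in $k$, the product form $f_i=\prod_x f_{ind_x}$, and the integer-rounding caveat), but the underlying argument is identical.
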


\begin{proof}
Table \ref{table:standard vs iterated fixing p} shows that $n=n_{ind_x}$, however if the elements follow a hierarchical structure and the IBFs are hashed field by field then $n_{ind_x} < n$. If $p_{ind_x}$ and $m_{ind_x}$ are maintained then $k_{ind_x}$ is increased: $k_{ind_x}=-\frac{m_{ind_x}}{n_{ind_x}}\ln p_{ind_x}$ and $m_i$ remains the same. As a result, $f_{ind_x}$ is reduced: $f_{ind_x}\approx(1-p_{ind_x})^{k_{ind_x}}$. Consequently, $f_i$ is also reduced. Then, IBFs reduce the overall probability of false positives when compared to Standard BFs: $f_i<f$. 
\end{proof}

\section{Routing protocol}
\label{sec:routing}


At the time of writing, NLSR \cite{hoque2013nlsr} is the default routing protocol used in NDN. NLSR is a link state protocol extended from classic OSPF which is widely used in intra-network routing. In the following, we first briefly recap the mechanisms of NLSR, then we show how to modify NSLR to accommodate our proposed solution. Herein, it is worth emphasizing that even though the corresponding modifications are necessary, such engineering efforts are trivial in practice.

NLSR propagates two types of Link-State Advertisement LSA. The first type (i.e., Adjacency LSA) is used to advertise a router's link state information to its directly connected neighbours. On the other hand, the second type (i.e., Prefix LSA) is used to advertise the name prefixes registered with the current node. Note that NLSR does not bundle multiple prefixes in one LSA. Instead, each prefix must be advertised separately (due to the high traffic cost by using hierarchical naming). Both LSA are properly wrapped into NDN $Interest$ and $Data$ packets and the carried information is stored in a Link State Database (LSDB) at each router. Whenever there are any changes in link states or registered prefixes, the changes will be advertised with the corresponding type of LSA and the LSDB will be synchronised as well in a hop-by-hop fashion using CCNx sync and repo protocols \cite{CCNx}.

With the information stored in LSDB, a node can first construct a weighted graph for the network it resides in. The weight on a link represents the cost of data transmission. By running Dijkstra's shortest path algorithm, the node further calculates the path to every known prefix in order to construct a forwarding table (i.e. FIB).

In our proposed solution, since hashes have replaced hierarchical names, Prefix LSA will carry these hashes instead of plain text names accordingly. The hashes can be either sent out separately or bundled in one LSA (without changing original NLSR semantics) thanks to its compact format. Whenever a Prefix LSA goes through a router it is necessary to introduce mechanisms to handle the iterated hashes. Meanwhile, Adjacency LSA remains the same as in original NLSR.

When constructing a forwarding table, a node first calculates the shortest path to every known node as before. Then the router simply ``OR'' the hashes of the registered content at a destination with the associated I(FIB)F with the corresponding link (i.e., next-hop link leading to the destination). Comparing to the original algorithm of building FIB in NLSR, our adapted version (see Section \ref{sec:forwarding}) uses a simple bit-wise ``OR'' operation rather than parsing prefixes and combining them into a single FIB, which obviously leads to much lower computation complexity.

\section{Related Work}
\label{sec:related}

Previous work on content management deals with lookup solutions based on BFs. In TB$^2$F (Tree-Bitmap and Bloom Filter) \cite{quan2014tb2f} a tree structure to save content in Content-Centric Networking is defined. Top leaves follow a T-segment Tree while bottom leaves require counting BFs for content storage. The solution proposed shows that if the structure is well-designed it provides good scalability.


Furthermore, some name lookup techniques are specifically designed for NDN. In Name Lookup engine with Adaptive Prefix Bloom filter (NLAPB) \cite{quan2014scalable} name prefixes are divided in B-prefixes and T-suffixes. Standard BFs match B-prefixes while a small-scale trie is used for T-suffixes. The division is based on the popularity of names to speed up the lookup.



In NameFilter \cite{wang2013namefilter} the lookup of names is achieved using two-stage BFs. The first BFs save name prefixes based on their lengths. Then, each next-hop port is represented by a BF. Each name prefix is associated to the appropriated BFs depending on their associated ports.



Thereafter NameFilter, a new technique to speed up name lookup in NDN \cite{wang2014fast} has been defined. First of all, it requires to compute the distribution of name prefixes with the aim of reducing the time of matching longest prefixes. Afterwards, perfect hash tables store the signature of prefixes.


All these strategies propose new methods to substitute the FIB. The main difference with our solution is that we require a FIB for each output interface of a node and we directly transmit the iterated hashes required for the lookup in the $Name$ of an $Interest$. As a consequence, the $Name$ received is checked against all possible output interfaces in a straightforward manner without requiring intermediate stages. Therefore, the complexity required is very low. Finally, it is not necessary to keep next-hops in the FIB but a routing protocol as NLSR to find the appropriate output interfaces for each $Interest$.


\section{Conclusion}
\label{sec:conclusion}
In this research, we propose the construction of a Forwarding Information Base based on Iterated Bloom Filters I(FIB)F for Name Data Networks (NDN). We focus our efforts on maximizing the efficiency of the design. This is due to the fact that many constrained devices in terms of CPU, memory, bandwidth and battery are expected in the Future Internet.

First of all, we study the impact of substituting hierarchical names on $Interests$ by iterated hashes. We conclude that our strategy reduces the overhead of packets. Additionally, the complexity of the forwarding strategy compared to current solutions is also reduced. One of the advantages of our solution is that we require an I(FIB)F per output interface instead of a single FIB that defines the next-hop. As a consequence, $Names$ are directly checked against all possible interfaces without intermediate steps. Additionally, we present how to integrate the proposed solution with existing NDN protocol stack with minimal efforts.

Furthermore, we determine that a design based on Bloom filters reduces the routing memory. When comparing I(FIB)F with a Standard BF we state that our design may reduce the overall memory or the probability of false positives. Moreover, we evaluate different estimation methods of content names for an accurate design. Our results show that an estimation method is needed per field of the hierarchical structure.

To sum up, we conclude that I(FIB)F for Name Data Networks is a highly efficient solution for the Future Internet.



\section*{Acknowledgment}
This research has been financially supported by the Swiss National Science Foundation with an Early Postdoc.Mobility Fellowship under grant agreement no. P2GEP2\_168977.

\bibliographystyle{IEEEtran}
\bibliography{ref}

\begin{thebibliography}{10}
\providecommand{\url}[1]{#1}
\csname url@samestyle\endcsname
\providecommand{\newblock}{\relax}
\providecommand{\bibinfo}[2]{#2}
\providecommand{\BIBentrySTDinterwordspacing}{\spaceskip=0pt\relax}
\providecommand{\BIBentryALTinterwordstretchfactor}{4}
\providecommand{\BIBentryALTinterwordspacing}{\spaceskip=\fontdimen2\font plus
\BIBentryALTinterwordstretchfactor\fontdimen3\font minus
  \fontdimen4\font\relax}
\providecommand{\BIBforeignlanguage}[2]{{%
\expandafter\ifx\csname l@#1\endcsname\relax
\typeout{** WARNING: IEEEtran.bst: No hyphenation pattern has been}%
\typeout{** loaded for the language `#1'. Using the pattern for}%
\typeout{** the default language instead.}%
\else
\language=\csname l@#1\endcsname
\fi
#2}}
\providecommand{\BIBdecl}{\relax}
\BIBdecl

\bibitem{kutscher2016information}
D.~Kutscher, S.~Eum, K.~Pentikousis, I.~Psaras, D.~Corujo, D.~Saucez,
  T.~Schmidt, and M.~Waehlisch, ``Information-centric networking (icn) research
  challenges,'' Tech. Rep., 2016.

\bibitem{zhang2014named}
L.~Zhang, A.~Afanasyev, J.~Burke, V.~Jacobson, P.~Crowley, C.~Papadopoulos,
  L.~Wang, B.~Zhang \emph{et~al.}, ``Named data networking,'' \emph{ACM SIGCOMM
  Computer Communication Review}, vol.~44, no.~3, pp. 66--73, 2014.

\bibitem{katz2014introduction}
J.~Katz and Y.~Lindell, \emph{Introduction to modern cryptography}.\hskip 1em
  plus 0.5em minus 0.4em\relax CRC press, 2014.

\bibitem{munoz2015fragmented}
C.~Mu{\~n}oz and P.~Leone, ``Fragmented-iterated bloom filters for routing in
  distributed event-based sensor networks,'' pp. 248--261, 2015.

\bibitem{Standard}
S.~Tarkoma, C.~Rothenberg, and E.~Lagerspetz, ``Theory and practice of bloom
  filters for distributed systems,'' \emph{Communications Surveys Tutorials,
  IEEE}, vol.~14, no.~1, pp. 131--155, First 2012.

\bibitem{ThesisCFIBFs}
\BIBentryALTinterwordspacing
C.~Mu{\~n}oz, ``A distributed event-based system based on fragmented-iterated
  bloom filters,'' 01/19 2016. [Online]. Available:
  \url{http://nbn-resolving.de/urn:nbn:ch:unige-860190}
\BIBentrySTDinterwordspacing

\bibitem{hoque2013nlsr}
A.~Hoque, S.~O. Amin, A.~Alyyan, B.~Zhang, L.~Zhang, and L.~Wang, ``Nlsr:
  named-data link state routing protocol,'' in \emph{Proceedings of the 3rd ACM
  SIGCOMM workshop on Information-centric networking}.\hskip 1em plus 0.5em
  minus 0.4em\relax ACM, 2013, pp. 15--20.

\bibitem{bolstad2013introduction}
W.~M. Bolstad, \emph{Introduction to Bayesian statistics}.\hskip 1em plus 0.5em
  minus 0.4em\relax John Wiley \& Sons, 2013.

\bibitem{guo2010dynamic}
D.~Guo, J.~Wu, H.~Chen, Y.~Yuan, and X.~Luo, ``The dynamic bloom filters,''
  \emph{IEEE Transactions on Knowledge and Data Engineering}, vol.~22, no.~1,
  pp. 120--133, 2010.

\bibitem{bonomi2006improved}
F.~Bonomi, M.~Mitzenmacher, R.~Panigrahy, S.~Singh, and G.~Varghese, ``An
  improved construction for counting bloom filters,'' in \emph{European
  Symposium on Algorithms}.\hskip 1em plus 0.5em minus 0.4em\relax Springer,
  2006, pp. 684--695.

\bibitem{shannon1951prediction}
C.~E. Shannon, ``Prediction and entropy of printed english,'' \emph{Bell system
  technical journal}, vol.~30, no.~1, pp. 50--64, 1951.

\bibitem{murphy2007conjugate}
K.~P. Murphy, ``Conjugate bayesian analysis of the gaussian distribution,''
  \emph{def}, vol.~1, no. 2$\sigma$2, p.~16, 2007.

\bibitem{adamic2002zipf}
L.~A. Adamic and B.~A. Huberman, ``Zipf’s law and the internet,''
  \emph{Glottometrics}, vol.~3, no.~1, pp. 143--150, 2002.

\bibitem{CCNx}
\BIBentryALTinterwordspacing
``Parc, ccnx open source platform.'' 2016. [Online]. Available:
  \url{http://www.ccnx.org}
\BIBentrySTDinterwordspacing

\bibitem{quan2014tb2f}
W.~Quan, C.~Xu, A.~V. Vasilakos, J.~Guan, H.~Zhang, and L.~A. Grieco, ``Tb2f:
  Tree-bitmap and bloom-filter for a scalable and efficient name lookup in
  content-centric networking,'' in \emph{Networking Conference, 2014
  IFIP}.\hskip 1em plus 0.5em minus 0.4em\relax IEEE, 2014, pp. 1--9.

\bibitem{quan2014scalable}
W.~Quan, C.~Xu, J.~Guan, H.~Zhang, and L.~A. Grieco, ``Scalable name lookup
  with adaptive prefix bloom filter for named data networking,'' \emph{IEEE
  Communications Letters}, vol.~18, no.~1, pp. 102--105, 2014.

\bibitem{wang2013namefilter}
Y.~Wang, T.~Pan, Z.~Mi, H.~Dai, X.~Guo, T.~Zhang, B.~Liu, and Q.~Dong,
  ``Namefilter: Achieving fast name lookup with low memory cost via applying
  two-stage bloom filters,'' in \emph{INFOCOM, 2013 Proceedings IEEE}.\hskip
  1em plus 0.5em minus 0.4em\relax IEEE, 2013, pp. 95--99.

\bibitem{wang2014fast}
Y.~Wang, B.~Xu, D.~Tai, J.~Lu, T.~Zhang, H.~Dai, B.~Zhang, and B.~Liu, ``Fast
  name lookup for named data networking,'' in \emph{2014 IEEE 22nd
  International Symposium of Quality of Service (IWQoS)}.\hskip 1em plus 0.5em
  minus 0.4em\relax IEEE, 2014, pp. 198--207.

\end{thebibliography}

\end{document}